\newcommand{\xhdr}[1]{\vspace{1.7mm}\noindent{{\bf #1.}}}
\newcommand{\ie}{{i.e.}\xspace}
\newcommand{\cf}{{cf.}\xspace}
\newcommand{\Secref}[1]{Sec.~\ref{#1}}
\newcommand{\Thmref}[1]{Thm.~\ref{#1}}
\newcommand{\Lemmaref}[1]{Lemma~\ref{#1}}
\newcommand{\Defref}[1]{Def.~\ref{#1}}
\newcommand{\Figref}[1]{Fig.~\ref{#1}}
\newcommand{\Appref}[1]{Appendix~\ref{#1}}
\DeclarePairedDelimiter\abs{\lvert}{\rvert}
\DeclarePairedDelimiter\norm{\lVert}{\rVert}
\newcommand{\R}{\mathbb{R}}
\newcommand{\Z}{\mathbb{Z}}
\newcommand{\defeq}{\mathrel{\vcentcolon=}}
\DeclareMathOperator{\RNG}{RNG}
\newcommand{\leftmsgsmall}[1]{\>\hspace{-2cm}\sendmessageleft*{#1}\hspace{-2.5cm}\>}
\newcommand{\rightmsgsmall}[1]{\>\hspace{-2.5cm}\sendmessageright*{#1}\hspace{-2cm}\>}
\newcommand*{\msetl}{\{\mskip-4mu\{}
\newcommand*{\msetr}{\}\mskip-4mu\}}
\newcommand\restr[2]{{
  \left.\kern-\nulldelimiterspace 
  #1 
  \right|_{#2} 
  }}
\begin{document}
\title{Secure Summation via Subset Sums:\\A New Primitive for Privacy-Preserving Distributed Machine Learning}
\titlerunning{Secure Summation via Subset Sums}
%
%
%
\author{Valentin Hartmann \and
Robert West}
\authorrunning{V. Hartmann et al.}
%
\institute{EPFL\\
\email{\{valentin.hartmann,robert.west\}@epfl.ch}}

\maketitle              
\begin{abstract}
For population studies or for the training of complex machine learning models, it is often required to gather data from different actors. In these applications, summation is an important primitive: for computing means, counts or mini\hyp batch gradients. In many cases, the data is privacy\hyp sensitive and therefore cannot be collected on a central server. Hence the summation needs to be performed in a distributed and privacy\hyp preserving way. Existing solutions for distributed summation with computational privacy guarantees make trust or connection assumptions --- e.g., the existence of a trusted server or peer\hyp to\hyp peer connections between clients --- that might not be fulfilled in real world settings. Motivated by these challenges, we propose Secure Summation via Subset Sums (S5), a method for distributed summation that works in the presence of a malicious server and only two honest clients, and without the need for peer\hyp to\hyp peer connections between clients. S5 adds zero\hyp sum noise to clients' messages and shuffles them before sending them to the aggregating server. Our main contribution is a proof that this scheme yields a computational privacy guarantee based on the multidimensional subset sum problem. Our analysis of this problem may be of independent interest for other privacy and cryptography applications.

\end{abstract}
\section{Introduction}
Summation and averaging are primitives used in virtually every analysis of data. With the rise of neural networks, averaging of gradients has also become an essential part of machine learning model training.
The amount of data that is collected to compute these sums and averages is increasing, and the data is collected in ever more places: from phones, smartwatches, Internet-of-things devices, cars, etc.
At first glance, it seems as though this abundance of data should satisfy the needs of statistical analyses and data\hyp hungry machine learning models --- if only we could pool the data from all the different places. The caveat is that much modern data is privacy\hyp sensitive.

Hence, while the rate at which data is being produced is ever\hyp increasing, data collection and pooling has in a certain sense become not easier but harder, both because of new data protection laws such as the GDPR \cite{eu:gdpr} and because of more awareness in the wake of data scandals such as those surrounding Cambridge Analytica \cite{cambridge2018} or Strava~\cite{strava2018}.

\xhdr{Previous work}
A straightforward approach to mitigating privacy concerns would be to add noise to datasets and then merge them. However, this may require so much noise that it would render the data essentially useless \cite{duchi2018minimax,fienberg2010differential}.
Nowadays' research thus focuses on privacy\hyp preserving protocols that can compute sums on distributed datasets. They can be roughly divided into two groups: protocols based on differential privacy that provide privacy at the cost of summation accuracy, e.g., \cite{duchi2018minimax}; and protocols that do not come with a loss in accuracy, but instead with an increase in communication cost, e.g., \cite{bonawitz2017practical}. A main disadvantage of the latter approaches is that they typically assume peer\hyp to\hyp peer connections between clients or make trust assumptions on the clients or, for example, the existence of a trusted server for the distribution of cryptographic keys. See \Secref{sec:related} for a detailed discussion of related work.

\xhdr{Contributions: S5} In this paper we propose Secure Summation via Subset Sums (S5), a method that allows for computing sums of vectors in a setting with a central server and distributed data. It belongs to the group of approaches that do not compromise summation accuracy but instead increase communication to provide a computational privacy guarantee. Our guarantee holds both against a malicious server and against malicious clients. As opposed to previous methods, S5 does not require peer\hyp to\hyp peer connections between clients, and the only trust assumptions are the existence of two honest clients and a mix network (mixnet) \cite{chaum1981untraceable} such as Tor \cite{syverson2004tor}. The main ideas of S5 are not new; our contribution is their specific combination and the proof that this combination yields a computational privacy guarantee. S5 lets clients add noise vectors to their summands prior to sending them, in a way that removes all information about the original summands, thereby making the summands useless for the extraction of sensitive information about a client. The summands' usefulness for downstream tasks is restored in a second step, where the negative value of the noise is sent to the server through a mixnet. Adding this negative noise to the sum of the noisy summands yields the sum of the original summands. However, no single local summand can be recovered, because in the mixnet the noise vectors of all clients are shuffled. This turns the task of linking a noise vector to a specific noisy summand into a computationally infeasible instance of the subset sum problem, which turns the task of breaking privacy into a computationally hard problem.
\Figref{fig:protocol} provides an overview of S5.

\xhdr{Properties of S5}
During the summation process, the server learns nothing but the sum of the clients' summands, where each client contributes a single summand. These are the key properties of S5:
\begin{itemize}
    \item Main idea: privacy by adding and canceling noise.
    \item Computational privacy guarantee: compromising privacy would require solving a hard instance of the multidimensional subset sum problem, which we show to be computationally intractable.
    \item Trust assumptions: at least two honest clients; a mixnet such as Tor \cite{syverson2004tor}. The aggregating server may be malicious.
    \item Increase of communication compared to standard distributed summation by only a logarithmic factor in the number of users and the dimensionality of the summands.
    \item No need for peer\hyp to\hyp peer connections between clients.
    \item To the best of our knowledge, the first method without loss in summation accuracy that works under these trust and connection assumptions and provides a computational privacy guarantee.
    \item Limitation to \emph{either} small vectors and few clients, \emph{or} settings where clients have a stable Internet connection and are unlikely to drop out during the summation process.
\end{itemize}

Despite the limitation mentioned last, S5 applies to many situations of practical importance, e.g., 
multiple hospitals that each collect patient data and want to pool the data across hospitals to answer research questions;
a franchise company that wants to analyze customer behavior via purchase logs in its different branches; 
or a vendor of server software that wants to improve its product's performance by optimizing it for common usage patterns. In this last setting, the Internet connections would be stable because the software would run on servers. However, the software would continuously get new users and lose old ones, and the users typically would not know each other. Hence distributing cryptographic keys, which are required for many existing protocols for privacy\hyp preserving summation, manually between the different users would not be an option.

\xhdr{Application to federated learning}
S5 can be used to train machine learning models on distributed data in the federated learning (FL) paradigm \cite{federatedlearning}. In FL, clients compute model updates w.r.t.\ their local datasets and send them to a server, which uses the average of these updates to update the model. S5 can be used for the averaging to prevent the server from extracting information about the datasets from the model updates. We discuss this application in more detail in \Secref{sec:fl}.

\xhdr{Organization of the paper}
We begin by discussing related approaches and how they differ from S5 in \Secref{sec:related}. We then define the problem and the adversarial model in \Secref{sec:problem}, and present our solution in \Secref{sec:solution}. \Secref{sec:guarantee} contains the technical details of our privacy guarantee. In \Secref{sec:experiments} we present the results of various experiments with an implementation of S5. In \Secref{sec:fl} we describe how S5 can be used for federated learning. \Secref{sec:limitations} contains a discussion of the limitations of our work. A summary of the paper is given in \Secref{sec:conclusion}.

\section{Related Work}
\label{sec:related}
There are two general approaches towards protecting privacy when computing sums of distributed vectors: by perturbing the summands or by using cryptographic approaches.

The former approach leads to local differential privacy guarantees by adding independent noise to the summands before sending them to the server \cite{dwork2006calibrating,kasiviswanathan2011can}. An advantage of this approach is that in addition to the summands, the sum itself is also privacy protected. That is, the sum cannot be used to extract private user information because it is noisy. However, in this standard form, large amounts of noise need to be added, since each client's data needs to be protected individually without being able to rely on other clients \cite{duchi2018minimax,wang2019collecting}.
The addition of noise that cancels out in combination with a shuffle mechnanism, very much as in our method, has been used to prove differential privacy guarantees \cite{balle2020private,ghazi2019scalable}. However, in order to obtain a differential privacy guarantee, additionally a certain amount of noise that does not cancel out and thus reduces utility needs to be added. Our method can do without such noise, because we give a computational privacy guarantee instead.

Cryptographic approaches based on homomorphic encryption \cite{aono2016privacy,aono2018privacy} typically try to protect a private client's data from an aggregating server but not from other clients. Clients encrypt their summands with the same key before sending them to the server, which then performs the summation in the encrypted space. The resulting sum can be decrypted by the clients. The encryption scheme of
Shi et al.
\cite{shi2011privacy} allows the clients to encrypt their summands with different keys, but requires a trusted setup phase with a trusted server, or communication between the clients. Methods based on generic secure multiparty computation use secret sharing \cite{beimel2011secret}, where a secret value is distributed between the clients in several parts such that a certain number of clients is required to reconstruct the secret value. These protocols require communication between the clients and often a large number of honest clients \cite{ben1988completeness,chaum1988multiparty}. Methods without this last restriction \cite{lindell2011ips} still need direct communication between clients, which might not be possible, e.g., in the setting from the introduction where the vendor of a server software wants to gather usage statistics, since especially database servers often only allow connections from a whitelist of IP addresses for security reasons.
Bonawitz et al.
\cite{bonawitz2017practical} propose a protocol that is based on a similar idea as ours, i.e., adding noise that cancels out later. However, it requires a trusted server to distribute keys in the setup phase. Trusting an already established distributed public infrastructure (e.g., Tor), as in our case, typically comes with a much lower risk than having to trust a single party that sets up and manages a server with software that is used for only one specific application.

\section{Problem Definition}
\label{sec:problem}
In our setting, there are \(N\) clients, each client \(i\) with a share \(D_i\) of a dataset \(D\). Think, e.g., of different hospitals, each with a database with information about their patients. Clients are connected via the Internet to a (potentially malicious) server that wants to compute the sum \(s = \sum_{i=1}^N s_i\) over the values \(s_i = f_p(D_i)\) of a function \(f_p\) computed on the individual datasets \(D_i\). For example, \(f_p\) could simply compute the mean of one column of \(D_i\), or it could be the gradient function of a machine learning model. \(f_p\) takes values in \(\R^d\) and is parametrized by a vector \(p\).

\xhdr{Goal and adversarial model}
The server should learn the sum \(s\). At the same time, we want to prevent it from learning any of the summands \(s_i\) (if they cannot be learned from \(s\) itself), as they might contain sensitive information about a client's dataset. We assume that the server is actively malicious: in order to break privacy and to learn any of the \(s_i\), it may deviate from the summation protocol. In addition, the server may collude with any but two clients, and all of the colluding clients may deviate from the protocol as well.

\section{Proposed Solution: Secure Summation via Subset Sums (S5)}
\label{sec:solution}
A fact also used in other privacy protocols \cite{bonawitz2017practical} is that when computing a sum, it does not matter if one adds additional summands that sum up to 0. Instead of adding a 0 sum directly, our method operates in two steps: first, random vectors are added to the summands to obfuscate them, then the same vectors are subtracted again. While many of the elements of our method are---at least individually---not new (zero\hyp sum noise, shuffling, sending seeds for a random number generator), we are the first to combine them into a protocol with minimal connection and trust assumptions, and prove that this combination provides a computational privacy guarantee based on the subset sum problem.

\subsection{Preprocessing}
\label{sec:preprocessing}
For our solution we need to represent the entries of the summands \(s_i\) as elements from the group \(\Z_{2^m}\), \ie, the integers modulo \(2^m\) for an integer \(m\). Here, \(2^m\) is an upper bound on the entries of the sum \(s\), derived from an upper bound \(2^{\tilde{m}}\) on the entries of the summands \(s_i\) and the number \(N\) of users: \(m=\lceil\log(N)\rceil + \tilde{m}\), since when summing \(N\) values, at most \(\lceil\log(N)\rceil\) additional carry bits are needed. If the summands \(s_i\) are real\hyp valued, the clients can transform them to elements from \(\Z_{2^m}\) in a preprocessing step that we detail in the \Appref{sec:app_preprocessing}.

\subsection{Protocol}
\xhdr{Main idea} The idea behind S5 is to first add noise to the summands prior to sending them, let the server sum them up to obtain a noisy version of \(s\), and to then tell the server how much noise was added so that it can remove the noise from \(s\). However, the noise vectors of the different clients get shuffled throughout the process, turning the reconstruction of any of the \(s_i\) into a computationally hard subset sum problem \cite{kleinberg06}. So instead of sending one message containing \(s_i\), each client \(i\) additionally generates \(K\) independent random vectors \(r_{i1},\dots,r_{iK}\) and sends the following \(K+1\) messages:
\begin{equation*}
    (1)\ \tilde{s}_i \defeq s_i + \sum_{k=1}^K r_{ik}, \hspace{5mm}
    (2)\ -r_{i1},
    \hspace{5mm}
    \dots,
    \hspace{5mm}
    (K+1)\ -r_{iK}.
\end{equation*}
To obtain \(s\), the server simply has to sum up all messages it received, so from the utility perspective nothing has changed over summing the \(s_i\) directly. What about privacy? If the server knows which of the messages were sent by the same client \(i\), summing them up reveals \(s_i\)---exactly what we want to avoid.
The server has two ways to link messages from the same client with each other: (1) via their content and (2) via the metadata of the network packets. We will discuss both of those in the following two paragraphs.

\xhdr{Packet content} For making the messages unlinkable via the vectors they contain, we need to make them, or at least the \(r_{ik}\), all look indistinguishable. This can easily be done by sampling them
independently
from the same distribution. We, however, also do not want \(\tilde{s}_i\) itself to carry any information about \(s_i\). This could for example happen if \(K\) were small and the \(r_{ik}\) were sampled from a distribution with small variance. This is why we choose the uniform distribution on \(\Z_{2^m}^d\) for the noise vectors, that is,
\(r_{ik} {\sim} \mathcal{U}(\Z_{2^m}^d)\) i.i.d.
As a consequence, \(\tilde{s}_i\), too, is uniformly distributed on \(\Z_{2^m}^d\). Furthermore, any \(K\)\hyp element subset of the \(K+1\) messages a client sends is statistically independent. In \Secref{sec:guarantee} we show that the information that still remains in the set of messages cannot be used by a computationally bounded adversary if we choose \(K=dm/2\).

\xhdr{Metadata} There are two types of metadata that the server receives from each packet, which the server could use to link them to clients:
(1)~the source IP address and
(2)~the arrival time, which can be used to guess the sending time.
The IP address can be removed by routing the messages through different machines, which is a functionality provided by, e.g., a mix net such as the Tor network \cite{syverson2004tor}. Further, the server can be prevented from gaining information from the packet arrival times by letting the clients send all of their messages at random times within the same interval \cite{hartmann2019privacy}.

\xhdr{Malicious adversary} So far we worked under the assumption that the server is honest but curious, i.e., that it might try to infer additional information from the data it receives, but that it at least honestly follows the protocol. An actively malicious adversary, however, might, e.g., tell only a single client \(i\) to send its vectors during a specific time period, and could thereby reconstruct \(s_i\) by summing up all vectors received during this time period. We can remove the assumption of an honest but curious adversary by either hard\hyp coding the information necessary for executing the protocol in the client software, or by letting clients request this information multiple times from the server and only send their data if the information is the same each time, as proposed by
Hartmann et al.
\cite{hartmann2019privacy}.

\subsection{Improving Communication Efficiency}
\label{sec:efficiency}
It is not necessary to send the \(r_{ik}\) as vectors, which would be of the same, potentially high, dimension as the summands \(s_i\). Instead, the server and the clients can agree on a common random number generator (RNG) beforehand, e.g., by hardcoding it. A client then generates \(K\) seeds \(R_{i1}, \dots, R_{iK}\) and uses the RNG to compute \(r_{i1}, \dots, r_{iK}\). It then sends the vector \(\tilde{s}_i\) and the scalars \(R_{i1}, \dots, R_{iK}\), which are used by the server to compute \(r_{i1}, \dots, r_{iK}\) once again.

How many bits do we need for the seeds? Using seeds with fewer bits than the random vectors that are generated from them increases the probability of collisions, i.e., two users generating the same seeds and hence the same random vectors by chance. This might weaken the hardness guarantee. As we will see later, only collisions between the vectors of two of the users are to be avoided. If \(b\) is the number of bits used for the seeds, we can easily upper bound the collision probability \(q\) by assuming that the event of the collision of any two seeds is independent of the event of the collision of any two other seeds. We can then arrange the seeds in a list and compute the probability that the second seed collides with the first one, the probability that the third seed collides with the first or the second one and so on. Summing up yields \(q \leq \frac{2K(2K-1)}{2}\frac{1}{2^b}\).
For a desired target probability \(q\), we need to choose \(b = \log\left(\frac{2K(2K-1)}{2q}\right)\).
For \(d=10^6\) dimensions, an encoding length of \(m=30\) bits, a collision probability of \(q=10^{-10}\) and the most secure choice for \(K\), namely \(K=dm/2\) (\cf\ \Secref{sec:guarantee}), only 82 bits are required per seed.

\subsection{Computation and Communication Cost}
\label{sec:cost_analysis}
For the runtime and communication analysis we will assume that the computation of the summands \(s_i\) takes time in the order of the bit length, \(\mathcal{O}(d\tilde{m})\), where \(\tilde{m}\) is the number of bits used to represent each entry of the summands. \(d\tilde{m}\) is the size of the summands and hence a lower bound; a higher computation time would favor our protocol because it would reduce the multiplicative overhead of our method. We further assume that the query vector \(p\) is of the same dimensionality and encoding length as the summands, which is typically the case for, e.g., gradient descent, where \(p\) is the vector of model weights.

Furthermore, we have seen in \Secref{sec:efficiency} that the seeds should be represented using \(\mathcal{O}(\log(K))\) bits, and in \Secref{sec:hardness} we show that the most secure choice for \(K\) is \(K=dm/2\).

\xhdr{Runtime}
In the baseline case where each client sends their gradient directly, the runtime complexity for a client is \(\mathcal{O}(d\tilde{m})\), for the server it is \(\mathcal{O}(Nd\tilde{m})\).
For S5, every client has to sample \(K\) random seeds, generate the corresponding \(d\)\hyp dimensional random \(m\)\hyp bit vectors and add them to their summand. This has complexity \(\mathcal{O}(d^2 m^2)\). The server also has to generate those random vectors and add them up, leading to a complexity of \(\mathcal{O}(N d^2 m^2) = \mathcal{O}(N\log(N)^2 d^2 \tilde{m}^2)\).
Hence, the computation increases by a factor of \(\mathcal{O}(d\tilde{m}\log(N)^2)\) over the non\hyp private baseline, both for the clients and the server.

\xhdr{Communication}
Without S5, each client needs to request the query vector \(p\) from the server and send the summand, which are \(2d\tilde{m}\) bits in total.
Moving to S5, each client needs to receive the parameter vector and send the sum of \(s_i\) and the random vectors, together with the seeds. These are \(\mathcal{O}(d\tilde{m} + dm + K\log(K)) = \mathcal{O}(dm\log(d m)) = \mathcal{O}(d\tilde{m}\log(N)\log(d \tilde{m}\log(N)))\) bits.
When using S5, the communication of the clients and of the server therefore increases by a factor of \(\mathcal{O}(\log(N)\log(d\tilde{m}\log(N)))\) over non\hyp private summation.

\subsection{S5 in a Nutshell}
\label{sec:nutshell}
S5 operates in the group \(\Z_{2^m}^d\), to which real valued vectors must be mapped. The server first sends the query parameter vector \(p\) to the clients, which then compute the summands \(s_i\) w.r.t.\ their local dataset. They sample \(K=dm/2\) random seeds, use them to generate \(K\) uniformly random vectors in \(\Z_{2^m}^d\) and add them to their summand to obtain \(\tilde{s}_i\). Then they send \(\tilde{s}_i\) and the random seeds to the server through a mixnet. The server uses the seeds to generate the corresponding random vectors, and subtracts them from the vectors \(\tilde{s}_i\) it has received, obtaining the sum \(s\).
See \Figref{fig:protocol} for a graphical overview of S5.

\section{Privacy Guarantee}
\label{sec:guarantee}
The privacy guarantee we give is a guarantee w.r.t. the summands \(s_i\) (and not w.r.t.\ the sum \(s\)). In the following we use \(h\) to denote the summand of a client that the server is interested in and that might or might not have participated in the summation process.

The privacy guarantee says that it is impossible for the server to gain any knowledge---apart from what can be learned from the sum alone---about whether a certain client took part in the summation process or not, given full knowledge about the client's summand \(h\). If this is impossible even in the setting of full knowledge of \(h\), then the server will also not be able to learn anything about a specific client given only incomplete knowledge. In particular, it will not be able to reconstruct a client's summand from partial knowledge about the summand.

\subsection{Overview}
\label{sec:privacy_overview}
\begin{theorem}
\label{thm:main}
Assume that there exist at least two honest clients and that the set of subset sum problems with \(dm\) uniformly distributed \(d\)\hyp dimensional vectors with encoding length \(m\) per entry and the sum consisting of \(dm/2\) summands is computationally hard (\cf\ \Secref{sec:equivalence}).
Then a polynomially computationally bounded server is not able to prove that, with positive probability, for a given vector \(h\) there exists a client \(i\) among the set of honest clients such that \(h=s_i\), if this cannot be learned from \(s\) alone.
\end{theorem}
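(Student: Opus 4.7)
The plan is to reduce any successful attack by the server to finding a solution of a multidimensional subset sum instance of the type the theorem assumes is hard. First I would fix an iteration $t$ and hand the server every piece of information it is legitimately entitled to: all model parameters $w^t$, the gradients of the clients it controls, and the global gradient $g^t$. Let $H$ denote the set of honest clients, with $|H|\geq 2$, and set $G^t \defeq \sum_{i\in H} g_i^t$, a quantity the server computes by subtracting its controlled clients' contributions from $g^t$. The remaining view of the server is the multiset of $|H|(K+1)$ messages it receives from the honest clients through the anonymization network, together with the hypothesised targets $h^1,\dots,h^T$.

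The next step is to characterise this residual view distributionally. Because each $s_{ik}^t$ is uniform on $\Z_{2^m}^d$ and $\tilde g_i^t = g_i^t-\sum_k s_{ik}^t$, the tuple of messages sent by the honest clients is uniform on $(\Z_{2^m}^d)^{|H|(K+1)}$ subject only to the per-client constraints $\tilde g_i^t+\sum_k s_{ik}^t=g_i^t$. Marginalising over the sender-to-message assignment---precisely the information Tor and the randomised send times are designed to hide---collapses those per-client constraints into the single joint constraint that the $|H|(K+1)$ messages sum to $G^t$. Fresh seeds across iterations make the tuples of different rounds mutually independent given $(g_i^t)_{i,t}$, so without loss of generality I can work with a single iteration.

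The reduction then reads as follows. Suppose, for contradiction, that a polynomial-time server produces, with non-negligible probability, an iteration $t$, a client $i\in H$, and a certificate that $h^t=g_i^t$ in a situation where this fact is not already implied by $g^t$. By the previous step, the only information linking $g_i^t$ to any subset of the received messages is the sum constraint itself, so such a certificate must exhibit a subset $S$ of those messages with $|S|=K+1$ and $\sum_{v\in S}v=h^t$. Specialising to the worst case $|H|=2$, the server holds $dm+2$ uniform vectors summing to $G^t$ and must output a $(dm/2+1)$-sized subset summing to $h^t$. A standard padding argument---embed any target subset sum instance on $dm$ uniform vectors with solution size $dm/2$ by adjoining two extra uniform vectors and shifting the target accordingly---converts this into exactly the hard instance stipulated by the hypothesis, yielding the contradiction.

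The hardest part will be the middle step: formalising precisely what it means for the server to ``prove'' the equality $h^t=g_i^t$ and arguing that any such proof necessarily amounts to a subset sum witness. In particular, the formalisation should rule out indirect strategies such as combining information across the $T$ iterations (handled by the iteration independence above), exploiting the dishonest clients to inject non-uniform vectors (handled by subtracting their known contributions before the reduction), and any residual timing side channels (handled by the randomised transmission schedule prescribed by the protocol). Phrasing the statement as an indistinguishability game between the real server view and a simulated view in which $h^t$ is replaced by uniform noise is likely the cleanest way to close these gaps.
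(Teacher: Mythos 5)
Your overall skeleton matches the paper's: restrict to a single iteration by independence of the per-round randomness, reduce to the worst case of two honest clients, discard the other clients' messages as independent noise, and observe that a successful attack must exhibit a $K$-element subset of the received noise vectors summing to $h^{t}-a_j$ for one of the two noisy gradients $a_1,a_2$, which (since $a_1,a_2$ are uniform) is exactly a $d$-dimensional subset sum instance with $2K=dm$ uniform vectors, a uniform target, and a prescribed solution size $dm/2$ --- the instance class the theorem assumes to be hard. The padding step you describe is the same equivalence the paper establishes in \Secref{sec:equivalence}.

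However, the middle step as you state it is wrong, and not merely in need of formalization. You claim that marginalising over the sender-to-message assignment ``collapses those per-client constraints into the single joint constraint that the $|H|(K+1)$ messages sum to $G^t$'', i.e., that the multiset view is distributed as uniform vectors conditioned only on their total. That is false: the support of the multiset distribution still depends on the individual $g_i^t$, because a multiset is consistent with a particular split $(g_1^t,g_2^t)$ of $G^t$ only if it admits a partition into per-client groups with the right sums. (Take $K=1$ and two honest clients: the four messages determine the unordered pair $\{g_1^t,g_2^t\}$ up to a two-fold pairing ambiguity, which is far more information than the sum alone.) This residual information is precisely why the guarantee can only be computational --- an unbounded server can decide the consistency question by enumeration, which is why the statement is phrased as proving existence ``with positive probability''. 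The paper accordingly argues no statistical collapse; it identifies the server's task directly with the decision problem ``does there exist a $K$-element submultiset $\tilde{V}$ of $\msetl b_1,\dots,b_{2K}\msetr$ with $\sum_{\tilde{v}\in\tilde{V}}\tilde{v}=h^{t_0}-a_1$ or $=h^{t_0}-a_2$'', notes that the vectors and targets involved are uniform, and invokes the hardness assumption. The correct distributional fact it uses is weaker than yours: any $K$-element subset of the $K+1$ messages of one client is i.i.d.\ uniform, not the full set. If you replace your statistical claim with this direct identification of the attack with the decision problem, your argument coincides with the paper's.
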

We say ``with positive probability'' because the server will not necessarily be able to definitely prove, even with unlimited computational power, that there exists a client that had a specific summand, but only whether this is possible or not, i.e., whether the answer to the decision problem from \Secref{sec:equivalence} is true or false. We further point out that the privacy guarantee against the server automatically yields the same privacy guarantee against other clients since we allow the server to collude with all but two clients.

\xhdr{Intuition} \Thmref{thm:main} is a strong guarantee: If the server is not even able to determine whether a specific client participated in the summation, then it certainly will not be able to infer any other kind of information (e.g., a part of their dataset) about a client from the data it receives. This holds even in the case when the server has arbitrary side information about the client to be attacked, e.g., when the server knows some entries of a summand and wants to infer the rest of its entries. For example, assume that each dataset contains information about one person and that one of the pieces of information in the dataset is the age of the person, another one whether they have cancer. Then the server will not be able to tell whether one of the persons has a given age \(z\), and hence definitely not whether a person of age \(z\) (that is known to the server) has cancer or not.

In the remainder of this section we first prove \Thmref{thm:main} by showing that the task of detecting a given summand is equivalent to solving a certain instance of subset sum (\Secref{sec:equivalence}). We then show that these instances are computationally hard (\Secref{sec:hardness}), and finally discuss the computational complexity of existing algorithms for subset sum problems (\Secref{sec:hardness_practice}).

In the proof of \Thmref{thm:main} we allow the server to know which of the messages sent by the users are of the type \(\tilde{s}_i\) and which are of the type \(-r_{ik}\). This is the case when using the more efficient way of sending the \(-r_{ik}\) as random seeds instead of vectors (see \Secref{sec:efficiency}). Since the \(-r_{ik}\) arrive in a random order and are all independent and identically distributed, from the server's perspective they just form one big multiset of messages, and the \(\tilde{s}_i\) form another multiset, which we will together denote by \(M=(\msetl a_1,\dots,a_N\msetr, \msetl b_1,\dots,b_{NK}\msetr)\), where we use the notation \(\msetl\cdot\msetr\) for multisets.
The complexity guarantee we give is based on the multi\hyp dimensional subset sum problem \cite{emiris2017approximating,impagliazzo1996efficient}:
\begin{definition}[\(d\)\hyp dimensional Decisional Subset Sum Problem, \(d\)\hyp SSS]
Given \(n\) vectors \(V=\msetl v_1, \dots, v_n\msetr\) and a vector \(w\) in \(\Z_{2^m}^d\), decide whether there exists a submultiset \(\tilde{V}\subset V\) such that \(\sum_{v\in \tilde{V}} v = w\).
\end{definition}
Note that we can reduce the search version of this problem (finding a suitable \(\tilde{V}\)) to the decision version and the other way around \cite{kabanets16}.
We will now show that the problem instances described in Thm. \ref{thm:main} are equivalent to a certain set of \(d\)\hyp dimensional Subset Sum (\(d\)\hyp SSS) instances. In \Appref{sec:app_proof}, we show that these instances are computationally hard.

\subsection{Equivalence to \(d\)\hyp SSS}
\label{sec:equivalence}
We assume that the server is given the messages \(M=(\msetl a_1,\dots,a_N\msetr,\allowbreak \msetl b_1,\dots,b_{NK}\msetr)\)
(as defined in \Secref{sec:privacy_overview})
, where all vector entries are encoded with \(m\) bits. We assume further that at least two clients are honest, w.l.o.g.\ clients 1 and 2, and that the summand that the server is searching for was sent by either client 1 or client 2. For the moment we will ignore the messages of all other clients. We hence work with the messages \((\msetl a_1,a_2\msetr, \msetl b_1,\dots,b_{2K}\msetr)\), and the task of the adversarial server is to determine whether there exists a \(K\)\hyp element submultiset \(\tilde{V}\subset\msetl b_1,\dots,b_{2K}\msetr\) such that either \(a_1 + \sum_{\tilde{v}\in \tilde{V}} \tilde{v} = h\) or \(a_2 + \sum_{\tilde{v}\in \tilde{V}} \tilde{v} = h\). We can equivalently formulate this as proving or disproving the existence of a submultiset \(\tilde{V}\) such that either \(\sum_{\tilde{v}\in \tilde{V}} \tilde{v} = h - a_1\) or \(\sum_{\tilde{v}\in \tilde{V}} \tilde{v} = h - a_2\). Since the noisy summands \(a_1\) and \(a_2\) are uniformly random, the right sides of these two equations are uniformly random too.
Thus, the server's task is equivalent to solving a \(d\)\hyp SSS problem with a multiset of uniformly random vectors, a uniformly random target sum \(w\), and the additional constraint that the submultiset \(\tilde{V}\) needs to be of cardinality \(K\).

The messages of clients other than 1 or 2 are independent of those of clients 1 and 2 and can therefore be ignored as pure noise. Including them would only increase the chance of false positives, i.e., solutions to the \(d\)\hyp SSS search problem that do not correspond to a set of vectors sent by a single client.

\subsection{Hardness Guarantee}
\label{sec:hardness}
We show that the set of \(d\)\hyp SSS instances from \Secref{sec:equivalence}, which an adversary would have to solve to break privacy, is computationally hard for the parameter choice \(K=dm/2\). In the one\hyp dimensional case and without the additional constraint that the submultiset needs to be of cardinality \(K\), this has been done already by
Impagliazzo and Naor
\cite{impagliazzo1996efficient}.
In the following, we extend their proof to the \(d\)\hyp dimensional setting with a fixed submultiset cardinality.
We first formalize the problem using a similar notation as Impagliazzo and Naor, but invert it: whereas in their case the number of vectors \(K\) is fixed, we fix the encoding length \(m\) and write \(K\) as a function of \(m\):
\begin{definition}
\label{def:hard_problem}
Let \(B=\msetl b_1,\dots,b_{2K(m)}\msetr\) be a multiset of vectors drawn uniformly and independently from \(\Z_{2^m}^d\). The \emph{S5 problem} is the problem of inverting the function
\(g_B(S)=(B,\sum_{b\in S} b)\),
where \(S\) is a uniformly randomly drawn submultiset of \(B\) with cardinality \(K(m)\).
\end{definition}
Note that \(S\) can be represented as a vector \(t\in \{0,1\}^{2K(m)}\) with \(L_1\)\hyp norm equal to \(K(m)\) (\(b_i \in S\) iff \(t_i = 1\)); we will use the representation in the following.
We are interested in hard instances of this problem, depending on the number of vectors \(2K(m)\), i.e., instances for which the function from \Defref{def:hard_problem} is hard to invert. For this we extend the usual definition of one\hyp way functions \cite{goldreich2001foundations} to sequences of functions \(\{g_n\}\), where \(g_n\) is used for inputs of length \(n\) and may be random. In our case, \(g_{n/2K(m)dm}=g_B\) for a multiset \(B\) of \(2K(m)\) \(d\)\hyp dimensional random vectors and an encoding length of \(m\).
\begin{definition}[{\cite{impagliazzo1996efficient}}]
Let \(\{g_n\}\) be a sequence of (potentially random) functions defined on \(D_n\subset \{0,1\}^n\) and let \(g^*:\bigcup_n D_n \rightarrow\{0,1\}^*\) be defined by its restrictions to the \(D_n\):
\(\restr{g^*}{D_n} = g_n\).
\(\{g_n\}\) is \emph{one\hyp way} if the following two conditions hold:
\begin{itemize}
\item \(g^*(t)\) is computable in polynomial time for every \(t\in\bigcup_n D_n\).
\item Let \(\{t_n\}\) be a sequence of uniformly random inputs,
\(t_n{\sim} \mathcal{U}(D_n)\) i.i.d.
For every probabilistic polynomial\hyp time algorithm \(A\) (that attempts to invert \(g^*\)) and for all \(c>0\),
\begin{equation*}
\Pr(g^*(A(g^*(t_n))) = g^*(t_n)) < n^{-c}
\end{equation*}
for all sufficiently large \(n\).
\end{itemize}
\end{definition}
Since solving the search version of the S5 problem is exactly the problem of inverting a function \(g_n\), we call sets of instances for which the corresponding sequence \(\{g_n\}\) is one\hyp way \emph{hard} \cite{impagliazzo1996efficient}. In our case, sets of instances are defined by the number of messages as a function of the encoding length \(2K(m)\).
Using this definition, the hardest instances are those for which \(2K(m)=dm\):
\begin{theorem}[\cf\ {\cite[Prop.~1.2]{impagliazzo1996efficient}}]
\label{thm:hardness}
\leavevmode
\begin{enumerate}
\item Let \(2K'(m) \leq 2K(m) \leq dm\). If the S5 problem is hard for \(K'(m)\), then it is also hard for \(K(m)\).
\item Let \(dm \leq 2K(m) \leq 2K'(m)\). If the S5 problem is hard for \(K'(m)\), then it is also hard for \(K(m)\).
\end{enumerate}
\end{theorem}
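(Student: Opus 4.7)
My plan is to prove both statements by contrapositive, following the template of Impagliazzo and Naor's Proposition~1.2 lifted to the $d$-dimensional setting. Assuming a probabilistic polynomial-time algorithm $A$ that inverts SecGD SSS at parameter $K$ with non-negligible probability, I would construct an inverter $A'$ at parameter $K'$ by transforming a random $K'$-instance into one that is statistically close to a random $K$-instance, invoking $A$ on it, and reconstructing the desired preimage from $A$'s output.

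For part~1 ($K' \le K \le dm/2$) the reduction is by padding. Given a random $K'$-instance $(B', w')$ with $w' = \sum_{i \in S'} b'_i$, sample $2(K-K')$ fresh uniform vectors $c_1,\dots,c_{2(K-K')} \in \Z_{2^m}^d$, pick a uniform size-$(K-K')$ subset $T$ of the padding indices, set $B = B' \cup \{c_j\}$ and $w = w' + \sum_{j \in T} c_j$, and apply a uniform random permutation $\pi$ to the combined indices. A short marginal calculation shows that $(B, w)$ is distributed exactly as a random $K$-instance, so $A$ succeeds on it with the same probability as on a fresh random input. Crucially, in the regime $2K \le dm$ the expected number of spurious $K$-preimages of a fixed target is bounded by $\binom{2K}{K}/2^{md} \le 1/\sqrt{\pi K}$ by Stirling, so with overwhelming probability the planted preimage $S' \cup \{2K'+j : j \in T\}$ is the unique preimage of $w$; undoing $\pi$ in $A$'s output $\hat S$ and intersecting with the first $2K'$ positions then recovers $S'$.

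For part~2 ($dm/2 \le K \le K'$) the direction is reversed. Given a random $K'$-instance, pick a uniform random $2(K'-K)$-subset $R \subset [2K']$ and a uniform random $(K'-K)$-subset $R_1 \subset R$; form the reduced instance $B = B' \setminus R$, $w = w' - \sum_{i \in R_1} b'_i$, and invoke $A$. Whichever $K$-preimage $\hat S$ of $w$ over $B$ is returned, the set $\hat S \cup R_1 \subset [2K']$ has size $K'$ and sums to $w + \sum_{R_1} b'_i = w'$, so it is automatically a valid $K'$-preimage. The distribution of $(B, w)$ is not literally that of a random $K$-instance, because $w$ given $B$ mixes a true subset sum of $B$ with an independent noise term coming from the vectors of $R$ that lie inside $S' \triangle R_1$. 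However, in the high-density regime $2K \ge dm$ the target conditional distribution $w \mid B$ itself is nearly uniform on $\Z_{2^m}^d$, since the expected multiplicity of a given target as a $K$-subset sum of random $B$ is $\binom{2K}{K}/2^{md} \ge \Omega(1)$ and concentrates. The main obstacle I foresee is making this closeness quantitative: I would bound the statistical distance between the two conditional distributions via a leftover-hash style argument or a direct second-moment calculation on the multiplicity, ensuring that $A$'s non-negligible success probability on a genuine random $K$-instance transfers to our constructed input.
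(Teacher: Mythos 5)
Your reduction is correct, but it is genuinely different from the one the paper (following Impagliazzo and Naor) sketches. The paper keeps the number of summands tied to the family \(K(\cdot)\) and instead changes the \emph{encoding length}: a \(K'\)-instance at modulus \(2^{m_0}\) is matched to the \(K\)-solver by truncating low-order bits in the low-density case (false positives filtered via near-injectivity, \lemmaref{thm:characterization}.1) or by appending bits in the high-density case (few solutions lost, via near-uniformity, \lemmaref{thm:characterization}.2). You fix \(d\) and \(m\) and change the \emph{number of summands}: padding with fresh uniform vectors and planting half of them into the target for part 1, and deleting a random set \(R\) of vectors and subtracting a random half \(R_1\) of it from the target for part 2. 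Both routes rest on exactly the same injective/uniform dichotomy of \lemmaref{thm:characterization}. What yours buys: in part 1 you get an \emph{exact} distributional match (the permutation argument is right), and you avoid the paper's implicit need to find an integer \(m_1\) with \(2K(m_1)=2K'(m_0)\); what the paper's buys is a direct transfer of the original one-dimensional argument. The one step you leave as a plan --- the statistical closeness in part 2 --- does go through, and more easily than you suggest: conditioned on \(S'\cap R = R_1\) your construction \emph{is} a random \(K\)-instance, and conditioned on \(S'\cap R \neq R_1\) the noise term \(\sum_{i\in S'\cap R} b'_i - \sum_{i\in R_1} b'_i\) is a nonempty signed sum of uniform vectors independent of \(B\), so \(w\mid B\) is exactly uniform; since a random \(K\)-instance's target is quasi-random within an exponentially small amount by \lemmaref{thm:characterization}.2, the mixture is exponentially close to a random \(K\)-instance. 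One shared caveat: at the exact boundary \(2K(m)=dm\) your uniqueness bound in part 1 degrades to \(1-O(1/\sqrt{K})\) rather than exponentially small (Lemma~\ref{thm:characterization}.1 needs \(c<1\)), so the success probability transferred to \(K'\) is only non-negligible if the assumed inverter's advantage exceeds \(1/\mathrm{poly}\); the paper's sketch has the identical gap, so this is not a defect specific to your approach.
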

For the proof of \Thmref{thm:hardness} we need to characterize the instances of the S5 problem in the two different cases. In the first case the function \(g_B\) is almost injective, while in the second case it is almost surjective with all values in its range occurring almost the same number of times.
\begin{definition}[\cite{impagliazzo1996efficient,impagliazzo1989recycle}]
Let \(D\) be a probability distribution on \(\{0,1\}^n\). We say \(D\) is \emph{quasi\hyp random} within \(\varepsilon\), if for all \(U\subset \{0,1\}^n\) we have that \(\abs{\Pr_D(U) - \abs{U}/2^n} < \varepsilon\).
\end{definition}
\begin{restatable}[\cf\ {\cite[Prop.1.1]{impagliazzo1996efficient}}]{lemma}{hardinstances}
\label{thm:characterization}
\leavevmode
\begin{enumerate}
\item Let \(2K(m)\leq cdm\) for \(c<1\). Let \(B\) and \(S\) both be chosen uniformly at random. Except with probability exponentially small, there is no \(S'\neq S\) such that \(g_B(S)=g_B(S')\).
\item Let \(2K(m) \geq cdm\) for \(c>1\). Let \(B\) be chosen uniformly at random. Except with probability exponentially small (w.r.t. the choice of \(B\)), the distribution given by \(g_B(S)\) for a randomly chosen \(S\) is quasi-random (w.r.t.\ \(S\)) within an exponentially small amount.
\end{enumerate}
\end{restatable}
The proof of Lemma~\ref{thm:characterization} can be found in \Appref{sec:app_proof}.

\Thmref{thm:hardness} can now easily be deduced from \Lemmaref{thm:characterization} (see \cite{impagliazzo1996efficient}). Assume we have an algorithm that efficiently solves the S5 problem for instances with \(2K(m)\) vectors where the function \(K\) is chosen such that \(g_B\) is almost injective. Instances with \(2K'(m) < 2K(m)\) vectors can be transformed into instances with \(2K(m)\) vectors by removing enough of the least significant bits. This adds only few false positives, i.e., solutions to the inversion of \(g_B\), due to instances with \(2K(m)\) vectors being almost injective. Similarly, if we have an algorithm for instances with \(2K(m)\) vectors where \(g_B\) is almost uniform, we transform instances with \(2K'(m) > 2K(m)\) vectors by adding random bits at the end. Every solution to the modified
inversion
problem is a solution to the original
inversion
problem, and we do not lose many solutions because the sums \(g_B(S)\) are almost uniformly distributed for \(2K(m)\) vectors.


\subsection{Hardness in Practice}
\label{sec:hardness_practice}
One\hyp dimensional SSS is an NP hard problem \cite{Karp1972}, its multi\hyp dimensional generalization therefore is as well.
However, multi\hyp dimensional SSS has been mostly neglected by the research community so far, apart from a negative result about its approximability \cite{emiris2017approximating}.
One\hyp dimensional subset sum, on the other hand, has a long history of study. Similar to this paper, instances are typically characterized in terms of the ratio of the number of messages \(n\) and their encoding length \(l(n)\), where the optimal choice for security is \(l(n) = n\) \cite{impagliazzo1996efficient}. For \(l(n) > 1.06n\), SSS can be transformed into a lattice shortest vector problem \cite{joux1991improving,coster1991improved} that can be solved efficiently for certain instances but is, like SSS, NP hard in the general case. For \(l(n)=\mathcal{O}(\log(n))\), there exists a very efficient dynamic programming solution \cite{galil1991almost}. Instances with \(l(n)=n\) are hard instances in the same sense as in our paper: The number of possible summands equals the number of bits per summand. The fastest algorithms that solve such instances still require exponential time. The fastest traditional algorithm runs in \(\tilde{\mathcal{O}}(2^{0.291n})\) \cite{becker2011improved}, the fastest quantum algorithm in time \(\tilde{\mathcal{O}}(2^{0.226n})\) \cite{bernstein2013quantum}, where the notation \(\tilde{\mathcal{O}}\) suppresses polynomial factors. Thus, despite significant efforts, no efficient algorithm has been found for the hardest set of one\hyp dimensional SSS instances and it seems likely that this will be also the case for its multi\hyp dimensional counterpart.
Note that for small \(n\), the problem is still solvable. This translates to both \(d\) and \(m\) being small. Since \(m=\lceil\log(N)\,\tilde{m}\rceil\) and the server could lie to the clients about \(N\), in the case of \(d\) being small, \(\tilde{m}\) has to be chosen sufficiently large. Because this choice is transparent to the clients, they are able to detect when the server chooses a too small value and can refuse to participate in the training.

\section{Experiments}
\label{sec:experiments}
To assess the feasibility of our method, we conduct experiments with various parameter settings, where we measure the wall clock runtime and the amount of transmitted data. Experiments assessing the accuracy of the computed sum are not necessary, since the sum is computed exactly by design.

\subsection{Methodology}
We simulate the protocol on a single machine by routing all communication over the loopback interface. This means that the runtimes include the overhead of the network protocols, but not the delay of the network connection. The machine is equipped with an Intel Xeon E5-2680 v3, 256 GB of RAM, and running Ubuntu 18.04 LTS. We restrict the server to a single CPU core. This is similar to the experimental setting used by
Bonawitz et al.
\cite{bonawitz2017practical}. To have a reference value, we compare our method with sending the summands directly without any privacy measures.
We test with \(m=16\), \(m=32\) and \(m=64\) bits, and with summand vectors of dimension 10, 100 and 1000.
We sample the summands \(s_i\) uniformly at random from the admissible space. As in \Secref{sec:cost_analysis}, the parameter vector \(p\) of the query function is sampled uniformly at random from the same space.
We perform experiments with 8 and with 128 clients. Due to the \((m-\tilde{m})\) bits that need to be reserved for the carry in the summation, the number of bits per summand reduces by 3 in the former and 7 in the latter case.
In all experiments, we compute the sum of the \(s_i\) directly and compare it with the sum computed by S5 to ensure correctness.
The implementation is written in Python and based on NumPy, Flask and Gunicorn.\footnote{The code for generating the data, running the experiments and producing the plots is available on GitHub: \url{https://github.com/valentin-hartmann-research/S5}.}

\subsection{Results}
\xhdr{Runtime}
As we can see in \Figref{fig:runtime}, the runtime increases almost linearly with the number of dimensions \(d\) and the number of bits \(m\). At first, this seems to contradict the asymptotic results in \Secref{sec:cost_analysis}, which indicate an increase of the runtime that is quadratic in both quantities. However, in an implementation, the computational cost for managing the network communication far outweighs the cost for generating the random vectors from the seeds, at least for our parameter settings. For very large values of \(d\) and \(m\) we would see the quadratic behavior.

Even for the fairly high number of 128 clients and 1000 dimensions, the summation of 25\hyp bit numbers (32 bits minus the \(\log(128)\) bits reserved as carry bits) takes just above half an hour, making the method feasible even for gradient descent\hyp based training of machine learning models that requires many iterations.
The non\hyp private baseline is able to perform the summation for all parameter settings in at most 1.5 seconds.

\xhdr{Transmitted data}
We measure both the data sent and received by the clients. The results are shown in \Figref{fig:traffic}. As predicted by the theory, neither value changes with the number of clients, hence we only show one set of plots for both 8 and 128 clients.
The behavior w.r.t.\ \(d\) and \(m\) is almost linear and thus very similar to the one of the runtime. The asymptotics in \Secref{sec:cost_analysis} suggest an additional logarithmic factor. This comes from the fact that for more dimensions and more bits the number of bits for the seeds needs to increase. But for our parameter settings, the size of the seeds is far outweighed by the overhead of the network protocols, which makes this factor disappear.

Again looking at the case of 1000 dimensions and 32 bits, only less than 25 MB need to be transmitted by each client, making the method suitable even for settings with slow internet connections.
In the case of the non\hyp private baseline, less than 1.5 MB of data need to be transmitted in any experiment.

\begin{figure}
    \centering
    \subfloat[8 clients, hence 13, 29 and 61 bits per summand, respectively.]{\includegraphics[width=.49\linewidth]{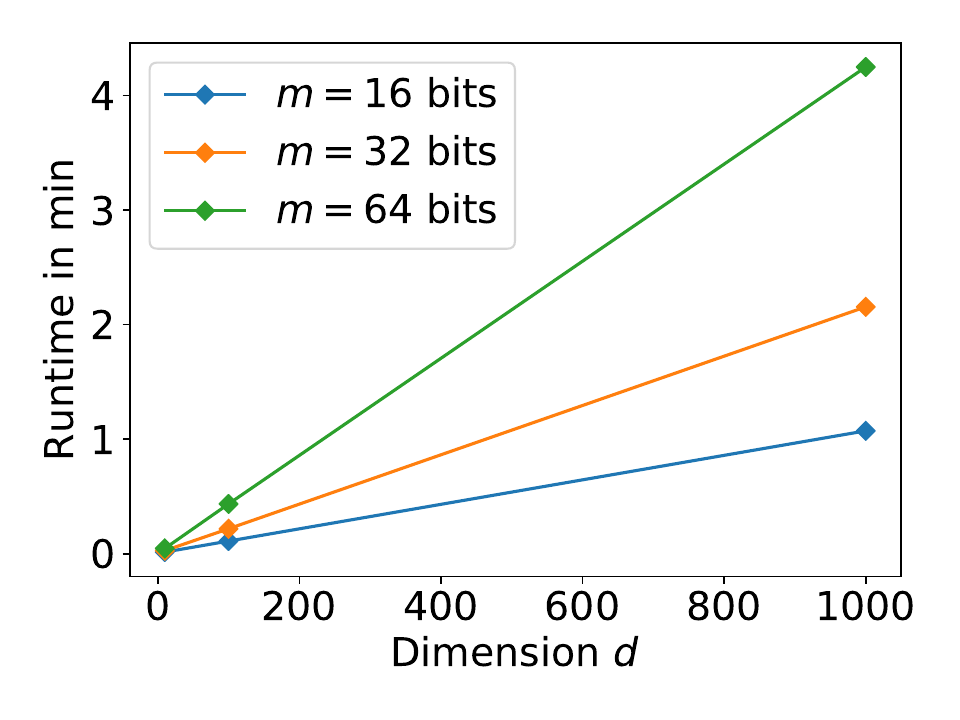}}
    \hspace*{\fill}
    \subfloat[128 clients, hence 9, 25 and 57 bits per summand, respectively.]{\includegraphics[width=.49\linewidth]{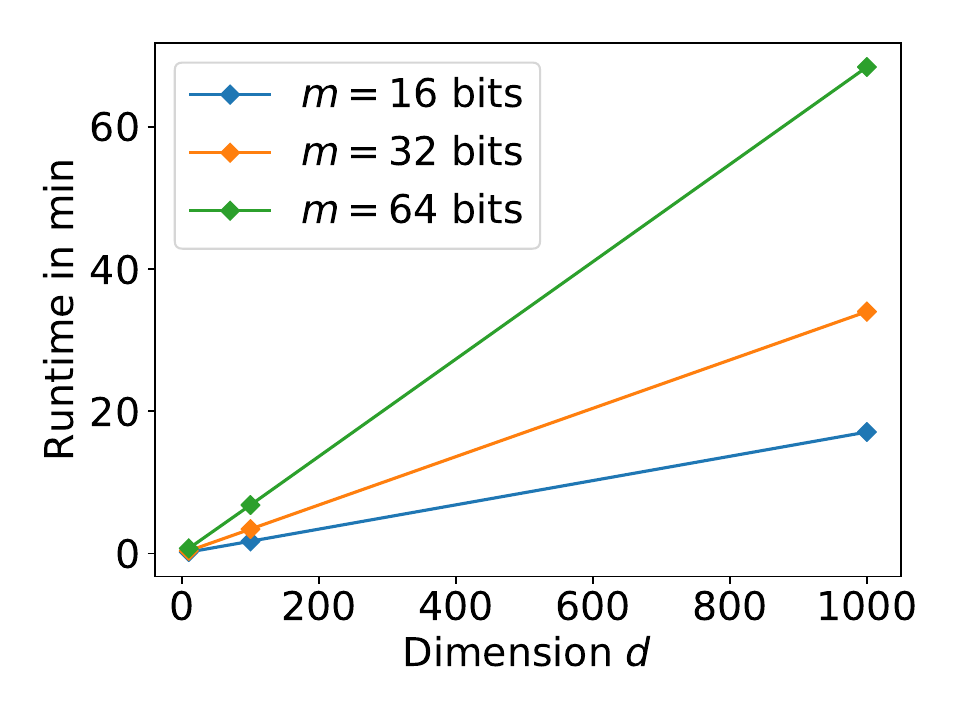}}
    
    \caption{Wall clock time for the summation. Measurements were done for
    \(d=10,\ 100,\ 1000\).}
    \label{fig:runtime}
\end{figure}

\begin{figure}
    \centering
    \subfloat[Amount of data sent from each client to the server.]{\includegraphics[width=.49\linewidth]{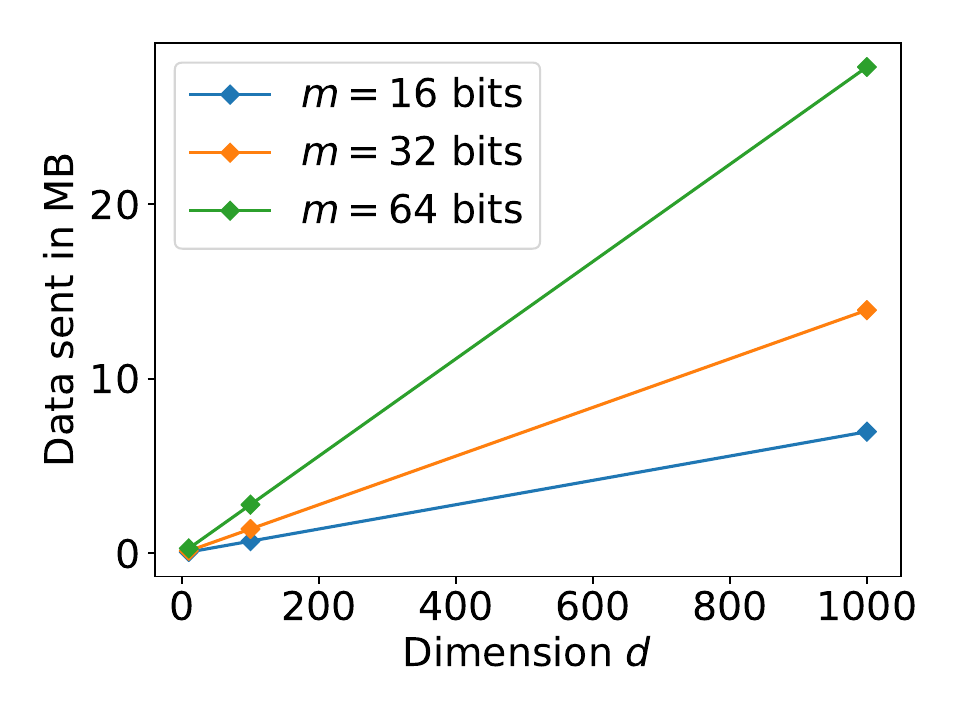}}
    \hspace*{\fill}
    \subfloat[Amount of data sent from the server to each client.]{\includegraphics[width=.49\linewidth]{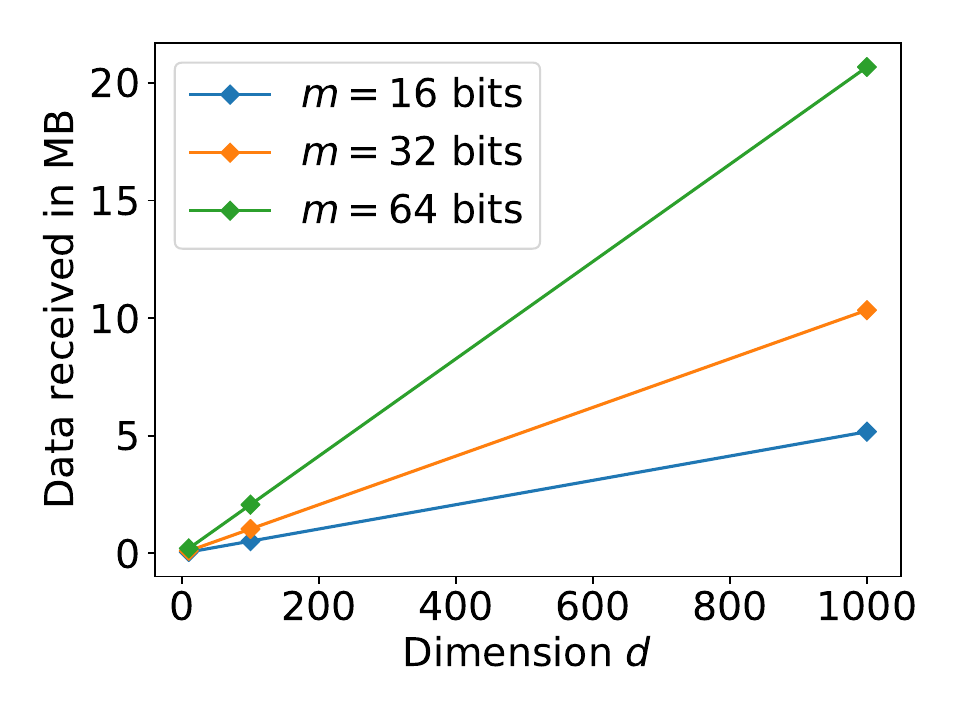}}
    
    \caption{Data transmitted by a single client. Measurements were done for
    \(d=10,\ 100,\ 1000\).}
    \label{fig:traffic}
\end{figure}

\section{S5 and Federated Learning}
S5 can be combined with federated learning (FL) \cite{federatedlearning} to provide privacy when training machine learning models in a distributed setting. FL is designed for settings where different clients each hold a dataset, and a machine learning model is to be trained on the union of all those datasets via stochastic gradient descent (SGD). In each iteration, the server ships the current model to all clients. The clients then compute the model's gradient w.r.t. their local data and share this gradient with the server. The server averages the received gradients and performs a gradient step to update the model parameters. The averaging step can be implemented via S5 to prevent the extraction of private information from the gradients. In our experiments in \Secref{sec:experiments} we showed that gradients with a thousand dimensions can be summed in less than an hour. For models with \(d\gg 1000\) dimensions, gradient sparsification or compression techniques can be used. In our setting, we can, for example, use random\hyp \(k\) sparsification: in each iteration, clients only send the values of (the same) \(k\ll d\) random entries of their local gradient instead of the full gradient to the server. This scheme has been shown to have the same rate of convergence as vanilla SGD \cite{stich2018sparsified}.
\label{sec:fl}

\section{Limitations}
\label{sec:limitations}
In S5, messages are routed through an anonymization network. We use this network as a building block and are therefore not concerned with certainly existing weaknesses of specific implementations such as Tor. Further, making as few assumptions as we do does not come without a cost. If a client drops out during the summation, the protocol has to be restarted, since the incomplete sum that the server receives is uniformly randomly distributed on the entire range \(\Z_{2^m}^d\). Hence, S5 is best suited for settings where clients have stable connections to the server.
We leave finding ways to relax this restriction to future work, as well as ensuring correctness, i.e., preventing clients from poisoning the summation by sending forged summands.

\section{Conclusion}
\label{sec:conclusion}
In this paper, we developed S5, a protocol for computing sums of distributed vectors in a privacy\hyp preserving fashion. It only requires two honest clients, no peer\hyp to\hyp peer connections between clients, and does not distort the sum. This is a big improvement over prior work on computational privacy guarantees that typically either assumes a trusted server or a larger number of trusted clients.
In experiments, we have shown that even with more than 100 clients, S5 can compute sums of 1000\hyp dimensional vectors in less than an hour, making it useful for applications such as Federated Learning; when combined with gradient sparsification, even for high\hyp dimensional models.
To the best of our knowledge, this paper is also one of the first ones to provide an analysis of the multidimensional subset sum problem, which we believe to be of independent interest, both from a theoretical perspective and for its use in other privacy and cryptography applications.

\bibliographystyle{splncs04}
\bibliography{references}

\appendix

\section{Overview of S5}
\Figref{fig:protocol} provides an overview of S5.
\begin{figure*}
\centering
\procedure{S5}{
\textbf{Server} \> \> \textbf{Client \(i\)} \\
\rightmsgsmall{\text{Send query parameters } p} \\
\> \> \text{Compute summand } s_i \leftarrow f_p(D_i) \\
\> \> \text{Sample seeds } R_{i1},\dots,R_{iK} \\
\> \> r_{ik}\leftarrow\RNG(R_{ik}),\dots,r_{iK}\leftarrow\RNG(R_{iK}) \\
\leftmsgsmall{\text{Send } \tilde{s}_i = s_i + \sum_{k=1}^K r_{ik}} \\
\leftmsgsmall{\text{Send } R_{i1},\dots,R_{iK}} \\
r_{11}\leftarrow\RNG(R_{11}),\dots,r_{NK}\leftarrow\RNG(R_{NK}) \> \> \\
\text{Compute sum } s\leftarrow\sum_{i=1}^N \left(\tilde{s}_i - \sum_{k=1}^K r_{ik}\right) \> \>
}
\caption{Overview of our method. The messages from the clients are sent through a mixnet such as Tor.}
\label{fig:protocol}
\end{figure*}

\section{Preprocessing}
\label{sec:app_preprocessing}
To transform real\hyp valued summands into vectors with entries from \(\Z_{2^m}\), there are two things that we need to do: (1) upper bound the summands and (2) represent them as non\hyp negative integers.

We first choose a number of bits \(\tilde{m}\) to use for the representation of each entry, and the position of the decimal point, i.e., how many bits are used for the integer and how many for the fractional part. In the following we assume for simplicity and w.l.o.g.\ that all bits are used for the integer part. In the case of gradient descent, there might not exist an a priori bound on the gradient entries, which naturally determines the number of integer bits. In this case, we can choose a large value for \(\tilde{m}\), but might still encounter gradients \(s_i\) with one or more entries with absolute value larger or equal \(2^{\tilde{m}-1}\). We project those gradients to the \(L_\infty\) ball with radius \(2^{\tilde{m}-1} - 1/2\) around 0: \(s_i \leftarrow (2^{\tilde{m}-1} - 1/2) s_i/\norm{s_i}_\infty\). This operation, known as clipping, preserves the ratio of the gradient entries w.r.t. each other and is commonly used in differentially private machine learning \cite{abadi2016deep} and also non\hyp differentially private neural network training \cite{pascanu2013difficulty}.

To obtain non\hyp negative vectors, we replace each \(s_i\) by \(s_i + (2^{\tilde{m}-1} - 1/2) \mathbbm{1}\), where \(\mathbbm{1}\) denotes the \(d\)\hyp dimensional 1\hyp vector. To reverse this transformation, the server can simply subtract \(N (2^{\tilde{m}-1} - 1/2)\) from the final \(s\).

Now all \(s_i\) lie in \([0,2^{\tilde{m}} - 1]^d\). We still need to discretize them to integers. This can be done by, e.g., rounding the entries stochastically to the nearest integer, i.e., for an integer \(i\) such that \(i \leq t \leq i+1\), we round \(t\) to \(i\) with probability \(t-i\) and to \(i+1\) with probability \(i+1-t\). In expectation, this does not change the value of \(t\). In the beginning we mentioned that we would work in \(\Z_{2^m}\). We choose \(m \geq \tilde{m} + \log(N)\) so that the sum over all \(s_i\) does not exceed \(2^m\).

We would like to remark that for gradient descent the discretization of the gradients comes with only a small decrease in model performance \cite{gupta2015deep}.

\section{Proof of Lemma 7}
\label{sec:app_proof}
We prove \Lemmaref{thm:characterization} from \Secref{sec:hardness}.

\hardinstances*
\begin{proof}
The proof closely follows that of Prop.~1.1 from Impagliazzo and Naor \cite{impagliazzo1996efficient}.
\begin{enumerate}
    \item Let \(S'\neq S\) be a different \(K(m)\)\hyp element subsets of the \(2K(m)\)\hyp element set \(B\) in \(d\) dimensions with \(2^m\) bits per dimension. Then \(g_B(S)\) and \(g_B(S')\) are independent uniformly random vectors, as mentioned earlier. There are \(\binom{2K(m)}{K(m)} - 1 \leq 2^{2K(m)}\) possible choices for \(S'\). Hence we have
    \begin{align*}
        &\Pr(\exists S'\neq S: g_B(S) = g_B(S'))\\
        &\leq \sum_{S'\neq S} \Pr(g_B(S) = g_B(S'))\\
        &\leq 2^{2K(m)} 2^{-dm} \leq 2^{-(1-c)dm}.
    \end{align*}
    \item Because \(g_B(t)\) and \(g_B(t')\) are independent and uniformly random w.r.t.\ the choice of \(B\) for \(t\neq t'\), \(\{g_B\}_{B\subset \Z_{2^m}^d, \abs{B}=2K(m)}\) is a family of universal hash functions from \(\{0,1\}^{2K(m)}\) to \(Z_{2^m}^d\). We can therefore apply the following lemma from Santha and Vazirani \cite{santha1984generating}:
    \begin{lemma}[Leftover Hash Lemma \cite{santha1984generating}]
    \label{thm:leftover}
    Let \(U\subset \{0,1\}^n\), \(\abs{U} \geq 2^l\). Let \(e>0\) and let \(G\) be an almost universal family of hash functions mapping \(n\) bits to \(l-2e\) bits. Then the distribution \((g,g(u))\) is quasi\hyp random within \(1/2^e\) (on the set \(G\times \{0,1\}^{l-2e}\)), where \(g\) is chosen uniformly at random from \(g\), and \(u\) uniformly from \(U\).
    \end{lemma}
    Since we only allow subset sums where exactly half of the vectors is summed up, the domain of our hash functions is restricted to \(U=\{v\in \{0,1\}^{2K(m)}:\ \norm{v}_1 = K(m)\}\). This set has a cardinality of \(\binom{2K(m)}{K(m)} \geq 2^{2K(m)}/(2K(m)+1) = 2^{2K(m) - \log(2K(m) + 1)}\), whereas the domain of \(g_B\) has cardinality \(2^{dm}\). We thus get \(e=2K(m) - \log(2K(m) + 1) - dm \leq (c-1-\mathcal{O}(log(dm)/dm))dm\) for the \(e\) from Lemma~\ref{thm:leftover}, which yields, for all \(T\subset \{0,1\}^{dm}\):
    \begin{align*}
        \mathop{\mathbb{E}}_B\abs{\Pr_{u}(g_B(u)\in T) - \frac{\abs{T}}{2^{dm}}} < 2^{-(c-1-\mathcal{O}(\frac{log(dm)}{dm}))dm}.
    \end{align*}
    Because this bound on the expectation w.r.t.\ \(B\) is exponential, Markov's inequality asserts that an exponential bound holds for all but an exponentially small fraction of all \(B\):
    \begin{align*}
        &\Pr_B(\abs{\Pr_{u}(g_B(u)\in T) - \frac{\abs{T}}{2^{dm}}} \geq 2^{-\frac{c-1}{2}dm})\\
        &\leq 2^{-(\frac{c-1}{2}-\mathcal{O}(\frac{log(dm)}{dm}))dm}.
    \end{align*}
\end{enumerate}
\end{proof}

\end{document}